\newtheorem{theorem}{Theorem}[section]
\newtheorem{lemma}[theorem]{Lemma}
\title{Using Weighted P-Values in Fisher's Method}
\date{June 16, 2020}
\author{
 Arvind Thiagarajan \\
  \texttt{arvindthiagarajan@gmail.com}
  }
\begin{document}
\maketitle
\begin{abstract}
Fisher's method prescribes a way to combine p-values from multiple experiments into a single p-value. However, the original method can only determine a combined p-value analytically if all constituent p-values are weighted equally. Here we present, with proof, a method to combine p-values with arbitrary weights.
\end{abstract}

\section{Introduction}

For a chosen null hypothesis, there may be many ways to compute a p-value given a measurement, but all these ways share a common structure. A summary statistic is chosen, the cumulative distribution function (CDF) for the summary statistic is computed, and finally the value of the CDF at the observed value of the summary statistic is reported as the p-value. Because this process must always pass through a CDF in this manner, p-values can be reasoned about without knowing exactly which summary statistics were used in their computation.

Given a series of experiments designed to test the same null hypothesis, \href{https://en.wikipedia.org/wiki/Fisher's_method}{Fisher's method} provides a way to determine a p-value for the combined set of experiments, given the p-values for the individual experiments. However, Fisher's method is not equipped to incorporate prior differential confidences that a reasoner may have in the individual experiments. We will begin by describing the original Fisher's method before deriving a corresponding method that utilizes weights.

\section{Background}

We will describe the original Fisher's method here, beginning with proofs for two relevant, well-known results.

\begin{lemma}
Conditioned on the null hypothesis being true, the p-value of any observation will be uniformly distributed between $0$ and $1$.
\end{lemma}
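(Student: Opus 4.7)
The plan is to invoke the probability integral transform, which is the standard tool for this kind of statement. Let $T$ denote the summary statistic with cumulative distribution function $F$ under the null hypothesis, so that by the paper's framing the p-value is $P = F(T)$ (or $1 - F(T)$, depending on sign convention; either works symmetrically).

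First I would set up the assumption that $T$ is a continuous random variable so that $F$ is continuous, which is the regime in which the conclusion is exact; the discrete case yields only a stochastic inequality and would need a brief aside. I would then compute, for any $u \in [0,1]$,
\[
\Pr(P \le u) \;=\; \Pr(F(T) \le u) \;=\; \Pr(T \le F^{-1}(u)) \;=\; F(F^{-1}(u)) \;=\; u,
\]
using the (generalized) right-inverse of $F$ in the middle step. This immediately identifies $P$ as having the CDF of a uniform random variable on $[0,1]$.

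The only subtle step is the middle equality, which requires care when $F$ is not strictly increasing: one should interpret $F^{-1}(u) = \inf\{t : F(t) \ge u\}$ and use monotonicity plus right-continuity of $F$ to justify $\{F(T) \le u\} = \{T \le F^{-1}(u)\}$ up to a null event. I expect this technicality to be the main obstacle to a fully rigorous proof; everything else is a one-line application of the definition of a CDF. Since the paper treats p-values informally via ``the value of the CDF at the observed value of the summary statistic,'' I would assume continuity of $F$ from the outset and present the three-line calculation above as the proof.
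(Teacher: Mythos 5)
Your proof is correct, but it takes a different route from the paper. The paper argues at the level of densities: writing $p(x)$ and $P(x)$ for the PDF and CDF of the observation and $q(y)$ for the PDF of the p-value $y = P(x)$, it equates the probability mass of an infinitesimal region under both parameterizations, $q(y)\,dy = p(x)\,dx$, and combines this with $dy/dx = P'(x) = p(x)$ to conclude $q(y) = 1$. You instead work directly with the distribution function, computing $\Pr(F(T) \le u) = u$ via the (generalized) inverse of $F$ --- the probability integral transform in its standard form. The two arguments establish the same fact, but yours is somewhat more general and more careful: it requires only continuity of $F$ rather than the existence of a density, it avoids the informal manipulation of differentials, and it explicitly flags the subtlety when $F$ is not strictly increasing (which the paper silently assumes away by dividing by $q(y)$ and implicitly taking $p(x) > 0$). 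The paper's density-based change of variables has the advantage of matching the style of the rest of the document --- the very next lemma uses the identical $p(y)\,|dy| = |dx|$ device --- and of directly producing the uniform \emph{density} $q(y)=1$, which is the form in which the result is consumed downstream. Either proof is acceptable; yours would be the preferred one in a measure-theoretic treatment.
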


\begin{proof}
Let $p(x)$ be the probability density function (PDF) and $P(x)$ be the cumulative density function (CDF) for observation $x$ under the null-hypothesis, and let $q(y)$ be the probability density function for the p-value obtained in such an experiment. Equating the parameterizations (under $x$ and $y$) of the probability mass over an infinitesimal region of space, we have

\begin{equation}
    \label{eq:1}
    q(y) dy = p(x) dx \implies \frac{dy}{dx} = \frac{p(x)}{q(y)}
\end{equation}

However, we also have that $y = P(x)$ by definition, which gives

\begin{equation}
    \label{eq:2}
    \frac{dy}{dx} = \frac{dP}{dx} = p(x)
\end{equation}

\noindent Equating \eqref{eq:1} and \eqref{eq:2} gives $q(y) = 1$, as desired.
\end{proof}

\begin{lemma}
The negative log of a random variable drawn from the uniform distribution between 0 and 1 will follow an exponential distribution.
\end{lemma}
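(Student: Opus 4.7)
The plan is to prove this by directly computing the cumulative distribution function of the transformed variable and recognizing it as that of an exponential distribution. Let $U$ be uniform on $[0,1]$ and set $Y = -\ln U$; since $U \in (0,1]$ almost surely, $Y$ is supported on $[0,\infty)$, which matches the support of the exponential distribution.

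First I would fix $y \geq 0$ and compute $\Pr(Y \leq y) = \Pr(-\ln U \leq y) = \Pr(U \geq e^{-y})$. Because $U$ is uniform on $[0,1]$ and $e^{-y} \in (0,1]$, this probability equals $1 - e^{-y}$, which is exactly the CDF of an exponential distribution with rate $1$. Differentiating gives the density $e^{-y}$ on $[0,\infty)$, completing the identification.

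As a sanity check (and an alternative route parallel to the change-of-variables argument used in the previous lemma), I could equate the probability masses $q(y)\,dy = p(u)\,du$ with $p(u) = 1$ and $u = e^{-y}$, so that $|du/dy| = e^{-y}$ yields $q(y) = e^{-y}$ directly. Either derivation works; I would present the CDF version as the main proof since it makes the exponential nature manifest without requiring the reader to recall the density of the exponential distribution separately.

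I do not anticipate a real obstacle here: the result is a textbook inverse-CDF observation, and the only subtlety worth mentioning is that $U = 0$ occurs with probability zero, so $Y = -\ln U$ is almost surely well-defined and finite. I would state this briefly to justify restricting attention to $y \geq 0$, then close out the proof.
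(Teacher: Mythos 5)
Your proof is correct, but your primary route differs from the paper's. The paper argues at the level of densities: it equates the infinitesimal probability masses $p(y)\,|dy| = 1\cdot|dx|$ under the change of variables $y=-\log x$, computes $|dy/dx| = 1/x$, and reads off $p(y) = x = e^{-y}$. You instead compute the CDF directly, $\Pr(Y\le y)=\Pr(U\ge e^{-y})=1-e^{-y}$, and differentiate. The two arguments are essentially equivalent in content, but yours is somewhat more careful: the CDF computation avoids the informal manipulation of differentials and makes the monotonicity of the transformation (and hence the validity of the change of variables) explicit, and your remark that $U=0$ has probability zero tidies up the support issue that the paper glosses over. The paper's version has the advantage of running in exact parallel with its Lemma 2.1, which uses the same infinitesimal-mass argument, so the two lemmas read as instances of one technique. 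Your secondary "sanity check" paragraph is in fact the paper's proof almost verbatim, so you have both derivations in hand; either is acceptable here.
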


\begin{proof}
Let the original (uniformly distributed) variable be $x$ and the new variable and distribution be $y = -\log{x}$ and $p(y)$ respectively. The probability mass of an infinitesimal region under the $x$-parameterization is $1 * |dx|$, while the probability mass under the $y$-parameterization is

\begin{equation}
    \label{eq:3}
    p(y) |dy| = p(y) |\frac{dy}{dx} dx| = p(y) \frac{|dx|}{x}
\end{equation}

Equating both expressions and simplifying gives $p(y) = x = e^{-y}$, as desired.

\end{proof}

Let \{$x_i$\} be the p-values obtained from a set of experiments testing the same null hypothesis. The original Fisher's method \cite{reference} uses 

\begin{equation}
    \label{eq:4}
    S = -\sum_{i}{\log{x_i}}
\end{equation}

as a summary statistic, noting that under the null hypothesis, $2S$ should follow a Chi-Squared distribution. However, if we were to introduce confidence weights $w_i$ corresponding to our $x_i$, then

\begin{equation}
    \label{eq:5}
    S(\{w_i\}) = -\sum_{i}{w_i \log{x_i}}
\end{equation}

would still be a valid summary statistic, but $2S$ would no longer follow a Chi-Squared distribution. In the final section of this paper, we derive a method, with proof, for computing the CDF of (and thus the combined p-value associated with) $S(\{w_i\})$

\section{Weighted Fisher's Method}

Let $\{v_i, \forall i \in \mathbb{Z} | 1 \leq i \leq k\}$ be a set of exponentially distributed random variables and $\{w_i, \forall i \in \mathbb{Z} | 1 \leq i \leq k\}$ be a set of fixed positive weights. In order to use Fisher's method with weights, we need to characterize the PDF and CDF of $V = \sum_{i}{w_i v_i}$. In the next two subsections, we will consider two kinds of weight sets - those in which all weights are distinct, and those in which all weights are identical. In the third subsection, we will describe how to handle an arbitrary weight set using the results from these two cases. We do note that the latter case (identical weights) is already handled by the standard implementation of Fisher's method, but we include it here for completeness.

Before we present our results and associated proofs, it's worth first stating that the distribution of $v = w_i v_i$ for any $i$ is 

\begin{equation}
    \label{eq:6}
    \displaystyle p(v) = \frac{1}{w_i} e^{-\frac{v}{w_i}}
\end{equation}

Also, while we have seen neither the proofs we present here nor the full generality of our results in prior work, we have observed specific cases \cite{alves}\cite{good} of our derived results in use throughout the broader statistics community. 

\subsection{Distinct Weights}

Before we begin the main proof, we would like to establish a smaller result.

\begin{lemma}
\begin{equation}
    \label{eq:7}
    \sum_{i=1}^{k+1}{\frac{{w_i}^{k-1}}{\prod_{j \neq i}{(w_i - w_j)}}} = 0
\end{equation}
\end{lemma}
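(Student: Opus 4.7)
The plan is to recognize this identity as a direct consequence of Lagrange interpolation applied to a polynomial whose degree is too low to have a nonzero coefficient at $x^k$. Specifically, I would let $f(x) = x^{k-1}$, which has degree $k-1$. Since the nodes $w_1, \dots, w_{k+1}$ are distinct and $k-1 \leq k$, the unique polynomial of degree at most $k$ agreeing with $f$ at these $k+1$ nodes is $f$ itself, and the Lagrange interpolation formula therefore gives
\begin{equation*}
x^{k-1} \;=\; \sum_{i=1}^{k+1} w_i^{k-1} \prod_{j \neq i} \frac{x - w_j}{w_i - w_j}
\end{equation*}
as an identity of polynomials.

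Next I would compare the coefficient of $x^k$ on both sides. On the left, this coefficient is zero, since $\deg(x^{k-1}) = k-1 < k$. On the right, each Lagrange basis polynomial $\prod_{j \neq i}(x - w_j)/(w_i - w_j)$ is a degree-$k$ polynomial with leading coefficient $1/\prod_{j \neq i}(w_i - w_j)$, so the coefficient of $x^k$ on the right is exactly the sum appearing in equation~\eqref{eq:7}. Equating the two coefficients yields the stated identity.

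I expect no real obstacle here beyond spotting the Lagrange framing; once that is in place the argument is purely a leading-coefficient comparison. As a sanity check I could verify the $k=1$ case by hand (the identity reads $\frac{1}{w_1 - w_2} + \frac{1}{w_2 - w_1} = 0$), and I could note an alternative derivation via the partial-fraction expansion of $x^{k-1}/\prod_{i=1}^{k+1}(x - w_i)$: that rational function decays like $x^{-2}$ at infinity, so multiplying by $x$ and letting $x \to \infty$ forces the sum of the partial-fraction residues to vanish, giving the same identity.
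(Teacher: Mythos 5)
Your proof is correct, but it takes a genuinely different route from the paper. The paper establishes \eqref{eq:7} by complex analysis: it forms $h(z) = z^{k-1}/\prod_{j=1}^{k+1}(z-w_j)$, identifies the left-hand side of \eqref{eq:7} as the sum of the residues of $h$, and then bounds $\left\|\int_C h(z)\,dz\right\|$ over a circle of radius $R$ by $2\pi R^k/(R-\max_i(w_i))^{k+1}$, which vanishes as $R \to \infty$. Your argument replaces all of this with pure algebra: since $x^{k-1}$ is its own Lagrange interpolant through the $k+1$ distinct nodes $w_1,\dots,w_{k+1}$, and each basis polynomial $\prod_{j\neq i}(x-w_j)/(w_i-w_j)$ has degree $k$ with leading coefficient $1/\prod_{j\neq i}(w_i-w_j)$, reading off the coefficient of $x^k$ gives the identity immediately. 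Your version is more elementary (no contour integration or asymptotic estimate needed), generalizes transparently (the same comparison shows $\sum_i w_i^{m}/\prod_{j\neq i}(w_i-w_j) = 0$ for all $0 \le m \le k-1$, not just $m = k-1$), and sidesteps the need for the weights to be positive reals, which the paper's bound implicitly uses when it writes $R - \max_i(w_i)$ rather than $R - \max_i\lvert w_i\rvert$. The partial-fraction variant you sketch at the end --- multiplying $x^{k-1}/\prod_i(x-w_i)$ by $x$ and letting $x \to \infty$ --- is essentially the algebraic shadow of the paper's residue argument, so you have in effect recovered the paper's proof as a remark while giving a cleaner primary one.
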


\begin{proof}
Let $\displaystyle h(z) = \frac{z^{k-1}}{\prod_{j=1}^{k+1}{(z-w_j)}}$. Now consider the path $C$ running counterclockwise along the circle of radius $R$ centered at the origin in the complex plane. For arbitrarily large $R$, $C$ contains all the residues of $h$, which are in fact just the various $w_i$. As such, the residue theorem gives us that 

\begin{equation}
    \label{eq:8}
    \sum_{i=1}^{k+1}{\frac{{w_i}^{k-1}}{\prod_{j \neq i}{(w_i - w_j)}}} = \frac{1}{2\pi i}\int_{C}{h(z)dz}
\end{equation}

\noindent The magnitude of the integral on the right hand side of \eqref{eq:8} can be bounded above by

\begin{equation}
    \label{eq:9}
    \int_{C}{\frac{\|z\|^{k-1}}{\prod_{j}{\|z-w_j\|}}\|dz\|} \leq  \frac{2 \pi R^k}{(R - \max_i(w_i))^{k+1}}
\end{equation}

\noindent Taking $R \rightarrow \infty$, we have that

\begin{equation}
    \label{eq:10}
    |\sum_{i=1}^{k+1}{\frac{{w_i}^k}{\prod_{j \neq i}{(w_i - w_j)}}}| = \frac{1}{2\pi} \|\int_{C}{h(z)dz}\| \leq  \frac{R^k}{(R - \max_i(w_i))^{k+1}} \rightarrow (R - \max_i(w_i))^{-1} \rightarrow 0
\end{equation}

\noindent proving the desired result.
\end{proof}

\begin{theorem}
The PDF for the weighted sum $V$ of $k$ exponentially distributed variables when all weights are distinct is

\begin{equation}
    \label{eq:11}
    p_k(V) = \sum_{i=1}^{k}{\frac{{w_i}^{k-2} e^{-\frac{V}{w_i}}}{\prod_{j \neq i}{(w_i - w_j)}}}
\end{equation}
\end{theorem}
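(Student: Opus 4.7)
My plan is to prove the formula by induction on $k$, using convolution for the inductive step and invoking Lemma 3 to collapse the extra terms that appear.

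For the base case $k=1$, the product $\prod_{j\neq i}(w_i-w_j)$ is empty and hence equals $1$, so the formula reduces to $w_1^{-1}e^{-V/w_1}$, which matches equation~\eqref{eq:6}.

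For the inductive step, I would write $V_{k+1} = V_k + w_{k+1}v_{k+1}$ where $V_k = \sum_{i=1}^k w_i v_i$. Since $v_{k+1}$ is independent of the other $v_i$'s, the PDF of $V_{k+1}$ is the convolution
\begin{equation*}
p_{k+1}(V) = \int_0^V p_k(V-u)\,\tfrac{1}{w_{k+1}}e^{-u/w_{k+1}}\,du.
\end{equation*}
Substituting the inductive hypothesis term by term, each integral has the form $\int_0^V e^{-(V-u)/w_i}e^{-u/w_{k+1}}\,du$, a standard integral of an exponential which evaluates in closed form to $\frac{w_iw_{k+1}}{w_{k+1}-w_i}\bigl(e^{-V/w_{k+1}}-e^{-V/w_i}\bigr)$. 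After the $\frac{1}{w_{k+1}}$ factor cancels and the sign is absorbed into the denominator using $(w_{k+1}-w_i)\prod_{j\le k,\,j\neq i}(w_i-w_j) = -\prod_{j\le k+1,\,j\neq i}(w_i-w_j)$, the expression for $p_{k+1}(V)$ splits into two pieces: one that already looks like the first $k$ terms of the desired formula (with the new exponent $k-1 = (k+1)-2$ and the enlarged denominator product), and a leftover piece proportional to $e^{-V/w_{k+1}}$.

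The main obstacle—and the reason Lemma 3 was proved first—is showing that this leftover piece exactly equals the missing $i=k+1$ term of the desired formula. After the algebraic simplification above, the leftover is
\begin{equation*}
-e^{-V/w_{k+1}} \sum_{i=1}^{k}\frac{w_i^{k-1}}{\prod_{j\le k+1,\,j\neq i}(w_i-w_j)},
\end{equation*}
and what is needed is $\frac{w_{k+1}^{k-1}e^{-V/w_{k+1}}}{\prod_{j\le k}(w_{k+1}-w_j)}$. Moving the missing term to the other side, this is exactly the vanishing identity
\begin{equation*}
\sum_{i=1}^{k+1}\frac{w_i^{k-1}}{\prod_{j\neq i}(w_i-w_j)} = 0
\end{equation*}
from Lemma 3, which closes the induction. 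The conceptual crux is therefore the algebraic cancellation via Lemma 3; everything else is routine exponential integration and careful bookkeeping of signs and indices.
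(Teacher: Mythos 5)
Your proposal is correct and follows essentially the same route as the paper: induction on $k$, evaluating the convolution with the new exponential term, splitting off the piece proportional to $e^{-V/w_{k+1}}$, and invoking the vanishing identity of Lemma 3 to recover the missing $i=k+1$ term. Your explicit sign-absorption step $(w_{k+1}-w_i)\prod_{j\le k,\,j\neq i}(w_i-w_j) = -\prod_{j\le k+1,\,j\neq i}(w_i-w_j)$ is left implicit in the paper but is the same bookkeeping.
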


\begin{proof}
We will prove this result using induction. For the base case of $k=1$, it is easily confirmed that \eqref{eq:11} matches \eqref{eq:6}. Now for the inductive hypothesis, suppose we add weight $w_{k+1}$ and variable $v_{k+1}$ to our sets and redefine $V$ to include $w_{k+1}v_{k+1}$ in the sum. It follows that

\begin{equation}
    \label{eq:12}
    p_{k+1}(V) = \int_{0}^{V}{p_k(z) \frac{1}{w_{k+1}} e^{-\frac{V-z}{w_{k+1}}} dz}
\end{equation}

\noindent To evaluate this integral, we first perform an intermediate computation for $i\leq k$:

\begin{equation}
    \label{eq:13}
    f(i) = \int_{0}^{V}{\frac{1}{w_i}e^{-\frac{z}{w_i}}\frac{1}{w_{k+1}} e^{-\frac{V-z}{w_{k+1}}} dz} = \frac{1}{w_i-w_{k+1}}\left(e^{-\frac{V}{w_i}} - e^{-\frac{V}{w_{k+1}}}\right)
\end{equation}

\noindent Substituting \eqref{eq:11} into \eqref{eq:12}, exchanging the summation with the integral, and simplifying with \eqref{eq:13} gives

\begin{equation}
    \label{eq:14}
    p_{k+1}(V) = \sum_{i=1}^{k}{\frac{{w_i}^{k-1} f(i)}{\prod_{(j \leq k) \neq i}{(w_i - w_j)}}} = \left(\sum_{i=1}^{k}{\frac{{w_i}^{k-1} \left(e^{-\frac{V}{w_i}} - e^{-\frac{V}{w_{k+1}}}\right)}{\prod_{j \neq i}{(w_i - w_j)}}}\right) 
\end{equation}

\noindent \eqref{eq:14} can be separated into two sums as

\begin{equation}
   \label{eq:15}
   p_{k+1}(V) = \left(\sum_{i=1}^{k}{\frac{{w_i}^{k-1} e^{-\frac{V}{w_i}}}{\prod_{j \neq i}{(w_i - w_j)}}}\right) - e^{-\frac{V}{w_{k+1}}}\left(\sum_{i=1}^{k}{\frac{{w_i}^{k-1} }{\prod_{j \neq i}{(w_i - w_j)}}}\right) 
\end{equation}

\noindent and multiplying \eqref{eq:7} by $\displaystyle e^{-\frac{V}{w_{k+1}}}$ and adding the result to \eqref{eq:15} gives

\begin{equation}
   \label{eq:16}
   p_{k+1}(V) = \sum_{i=1}^{k+1}{\frac{{w_i}^{k-1} e^{-\frac{V}{w_i}}}{\prod_{j \neq i}{(w_i - w_j)}}}
\end{equation}
\noindent thus proving the inductive hypothesis and the desired result.
\end{proof}

\begin{theorem}
The right-tail p-value for the weighted sum $V$ of $k$ exponentially distributed variables when all weights are distinct is

\begin{equation}
   \label{eq:17}
    \sum_{i=1}^{k}{\frac{{w_i}^{k-1} e^{-\frac{V}{w_i}}}{\prod_{j \neq i}{(w_i - w_j)}}}
\end{equation}
\end{theorem}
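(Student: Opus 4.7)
The plan is to obtain the right-tail p-value as the integral of the PDF from the previous theorem over $[V, \infty)$ and show that each exponential term integrates in a way that bumps the exponent of $w_i$ in the numerator from $k-2$ up to $k-1$, producing \eqref{eq:17} directly.

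Concretely, I would start from
\begin{equation*}
\Pr(V' \geq V) \;=\; \int_V^{\infty} p_k(z)\, dz \;=\; \int_V^{\infty} \sum_{i=1}^{k} \frac{w_i^{\,k-2} e^{-z/w_i}}{\prod_{j\neq i}(w_i - w_j)}\, dz.
\end{equation*}
Since the sum is finite, I would swap sum and integral, and then for each $i$ use the elementary fact that $\int_V^{\infty} e^{-z/w_i} dz = w_i e^{-V/w_i}$ (using that $w_i > 0$). The factor of $w_i$ combines with $w_i^{k-2}$ to give $w_i^{k-1}$, while $\prod_{j\neq i}(w_i - w_j)$ is unaffected. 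The result is exactly \eqref{eq:17}.

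As a sanity check I would verify that at $V=0$ the expression evaluates to $1$, so that the PDF in \eqref{eq:11} is properly normalized. This reduces to showing $\sum_{i=1}^{k} w_i^{k-1}/\prod_{j\neq i}(w_i - w_j) = 1$, which follows from either a partial-fraction decomposition of $z^{k-1}/\prod_j(z-w_j)$ together with the limit $z\to\infty$, or a residue argument analogous to Lemma~1.3 applied to $h(z) = z^{k-1}/\prod_{j=1}^{k}(z-w_j)$ (noting that now the degrees differ by one, so the large-circle integral tends to $2\pi i$ rather than $0$).

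There is essentially no hard step here: the result is a termwise integration of \eqref{eq:11}, and the only nontrivial bookkeeping is tracking the exponent shift on $w_i$. The conceptual content of the proof has already been absorbed by the previous theorem; this corollary-like statement just translates the PDF into its tail probability.
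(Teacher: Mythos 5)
Your proposal is correct and is exactly the paper's proof: the paper likewise integrates \eqref{eq:11} termwise over $[V,\infty)$, using $\int_V^\infty e^{-z/w_i}\,dz = w_i e^{-V/w_i}$ to shift the exponent from $k-2$ to $k-1$. The normalization check at $V=0$ is a nice extra but not part of the paper's argument.
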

\begin{proof}
\begin{equation}
   \label{eq:18}
    \int_{V}^{\infty}{p_k(z) dz} = \sum_{i=1}^{k}{\frac{{w_i}^{k-2} \int_{V}^{\infty}{e^{-\frac{z}{w_i}}dz}}{\prod_{j \neq i}{(w_i - w_j)}}} = \sum_{i=1}^{k}{\frac{{w_i}^{k-1} e^{-\frac{V}{w_i}}}{\prod_{j \neq i}{(w_i - w_j)}}}
\end{equation}
\end{proof}

\subsection{Identical Weights}

\begin{theorem}
The PDF for the weighted sum $V$ of $k$ exponentially distributed variables when all weights are identical ($w_i = w$ $\forall$ $i$) is

\begin{equation}
   \label{eq:19}
    p_k(V) = \frac{1}{(k-1)!w}\left(\frac{V}{w}\right)^{k-1}e^{-\frac{V}{w}}
\end{equation}
\end{theorem}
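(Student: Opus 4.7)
The plan is to use induction on $k$, mirroring the structure used in the distinct-weights theorem, since this is by far the cleanest route and the convolution integrals collapse particularly nicely in the equal-weights case. An alternative would be to take the limit $w_i \to w$ in the distinct-weights formula \eqref{eq:11}, but that requires repeated applications of L'Hôpital's rule or a Taylor expansion of $e^{-V/w_i}$ about $w$, and it is less transparent than a direct induction.

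For the base case $k=1$, I would observe that \eqref{eq:19} reduces to $\frac{1}{w}e^{-V/w}$, which matches \eqref{eq:6} with $w_1 = w$. For the inductive step, I would assume the formula holds for $k$ and form the convolution
\begin{equation*}
p_{k+1}(V) = \int_{0}^{V} p_k(z)\,\frac{1}{w}\,e^{-\frac{V-z}{w}}\,dz,
\end{equation*}
exactly as in \eqref{eq:12} but with $w_{k+1} = w$. Substituting the inductive hypothesis produces a product of two exponentials $e^{-z/w}\,e^{-(V-z)/w}$, which collapses to the constant (in $z$) factor $e^{-V/w}$. The remaining integrand is the pure power $z^{k-1}$, and the $z$-integral over $[0,V]$ yields $V^k/k$.

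Combining these pieces gives $p_{k+1}(V) = \frac{1}{k!\,w}\left(\frac{V}{w}\right)^{k}e^{-V/w}$, which is exactly \eqref{eq:19} with $k$ replaced by $k+1$, closing the induction. There is no serious obstacle here: the only thing to watch is the bookkeeping of the factors of $w$ and the promotion of $(k-1)!$ to $k!$ via the factor of $1/k$ arising from $\int_0^V z^{k-1}\,dz$. In particular, unlike the distinct-weights proof, no auxiliary identity like Lemma 3.1 is needed, because the exponentials line up without any cancellation between separate terms.
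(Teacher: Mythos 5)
Your proposal is correct and is essentially identical to the paper's own proof: the same induction on $k$, the same convolution integral, the same collapse of the exponentials to $e^{-V/w}$, and the same evaluation $\int_0^V z^{k-1}\,dz = V^k/k$ to promote $(k-1)!$ to $k!$. Nothing further is needed.
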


\begin{proof}
We will prove this result using induction. For the base case of $k=1$, it is easily confirmed that \eqref{eq:19} matches \eqref{eq:6}. Now for the inductive hypothesis, we have that $p_{k+1}(V)$ is

\begin{equation}
   \label{eq:20}
    \int_{0}^{V}{p_k(z) \frac{1}{w} e^{-\frac{V-z}{w}} dz} = \frac{w^{-k-1}e^{-\frac{V}{w}}}{(k-1)!}\int_{0}^{V}{z^{k-1} dz} = \frac{e^{-\frac{V}{w}}}{k!w}\left(\frac{V}{w}\right)^{k}
\end{equation}

\noindent thus proving the inductive hypothesis.
\end{proof}

\noindent Before we extend this to find an expression for the p-value we're interested in, we will prove an intermediate result using differentiation under the integral sign.

\begin{lemma}
\begin{equation}
    \label{eq:21}
    \int_{c}^{\infty}{u^n e^{-au} du} = c^n \left(\sum_{j=0}^{n}{j!\binom{n}{j}(ca)^{-j}}\right)\frac{e^{-ca}}{a}
\end{equation}
\end{lemma}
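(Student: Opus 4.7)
The plan is to reduce the integral to the elementary case $n=0$ by differentiating under the integral sign with respect to the parameter $a$. Since $\int_c^\infty e^{-au}\,du = e^{-ca}/a$, and since each derivative with respect to $a$ (which is valid here because the integrand decays exponentially, uniformly in $a$ on any interval bounded away from $0$) brings down a factor of $-u$ in the integrand, one obtains
\[
    \int_c^\infty u^n e^{-au}\,du \;=\; (-1)^n \frac{d^n}{da^n}\!\left(\frac{e^{-ca}}{a}\right).
\]
The problem then becomes a purely algebraic computation of this $n$-th derivative.

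Next I would apply the Leibniz product rule to $\frac{d^n}{da^n}\bigl(e^{-ca}\cdot a^{-1}\bigr)$, using the two easily checked facts
\[
    \frac{d^{k}}{da^{k}} e^{-ca} = (-c)^{k} e^{-ca}, \qquad \frac{d^{k}}{da^{k}} a^{-1} = (-1)^{k} k!\, a^{-k-1}.
\]
Substituting these into $\sum_{j=0}^n \binom{n}{j}(\tfrac{d^{n-j}}{da^{n-j}} e^{-ca})(\tfrac{d^{j}}{da^{j}} a^{-1})$ gives a sum in which the sign factor $(-c)^{n-j}(-1)^j = (-1)^n c^{n-j}$ pulls out and cancels the outer $(-1)^n$.

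The remaining step is just bookkeeping: rewrite $c^{n-j} a^{-(j+1)} = \frac{c^n}{a}(ca)^{-j}$ to factor out the common piece $c^n e^{-ca}/a$ and leave the summand $j!\binom{n}{j}(ca)^{-j}$, matching the right-hand side of \eqref{eq:21}.

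The only real subtlety is justifying the interchange of differentiation and integration, but this is routine since for any $a$ bounded below by some $a_0 > 0$, $|u^n e^{-au}|$ is dominated by $u^n e^{-a_0 u}$, which is integrable on $[c,\infty)$. After that, the proof is a one-line Leibniz expansion followed by index-chasing, so the main obstacle is ensuring the signs and exponents of $c$ and $a$ are combined correctly to reproduce the stated closed form.
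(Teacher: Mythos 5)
Your proposal is correct and follows essentially the same route as the paper: both express $u^n e^{-au}$ as $(-1)^n \left(\frac{d}{da}\right)^n e^{-au}$, interchange differentiation with integration to reduce to $\int_c^\infty e^{-au}\,du = e^{-ca}/a$, and then expand the $n$-th derivative of $a^{-1}e^{-ca}$ via the Leibniz product rule. Your added remark justifying the interchange by domination with $u^n e^{-a_0 u}$ is a small rigor bonus the paper omits.
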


\begin{proof}
\begin{equation}
    \label{eq:22}
    \int_{c}^{\infty}{u^n e^{-au} du} = \int_{c}^{\infty}{(-1)^n \left(\frac{d}{da}\right)^n \left(e^{-au}\right) du}
\end{equation}

\noindent Exchanging the order of differentiation and integration in (20) gives

\begin{equation}
    \label{eq:23}
    \left(\frac{d}{da}\right)^n\int_{c}^{\infty}{(-1)^n e^{-au} du} = (-1)^n \left(\frac{d}{da}\right)^n\left(a^{-1}e^{-ca}\right)
\end{equation}

\noindent Using the binomial expansion of the product rule, we have

\begin{equation}
    \label{eq:24}
    \left(\frac{d}{da}\right)^n\left(a^{-1}e^{-ca}\right) = \sum_{j=0}^{n}{\binom{n}{j}\left(\frac{d}{da}\right)^j(a^{-1})\left(\frac{d}{da}\right)^{n-j}(e^{-ca})}
\end{equation}

\noindent Simplifying \eqref{eq:24}, we have that this expression becomes

\begin{equation}
    \label{eq:25}
    \sum_{j=0}^{n}{\binom{n}{j}(-1)^j j! (a^{-j-1}) (-c)^{n-j} e^{-ca}} = (-c)^n\left(\sum_{j=0}^{n}{j!\binom{n}{j} (ca)^{-j}}\right) \frac{e^{-ca}}{a}
\end{equation}

\noindent We substitute \eqref{eq:25} back into \eqref{eq:23} and simplify to obtain the desired result.
\end{proof}

\begin{theorem}
The right-tail p-value for the weighted sum $V$ of $k$ exponentially distributed variables when all weights are identical is

\begin{equation}
    \label{eq:26}
    \left(\sum_{i=0}^{k-1}{\frac{1}{i!}\left(\frac{V}{w}\right)^i}\right)e^{-\frac{V}{w}}
\end{equation}
\end{theorem}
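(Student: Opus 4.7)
The plan is to compute the right-tail integral $\int_V^\infty p_k(z)\,dz$ directly, using the closed form of $p_k$ from Theorem~\ref{eq:19} together with the integral identity from Lemma~\ref{eq:21}. Substituting the PDF gives
\begin{equation*}
    \int_V^\infty p_k(z)\,dz = \frac{w^{-k}}{(k-1)!}\int_V^\infty z^{k-1} e^{-z/w}\,dz,
\end{equation*}
which matches the left-hand side of \eqref{eq:21} under the parameter choice $n = k-1$, $a = 1/w$, and $c = V$.

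Next, I would apply Lemma~\ref{eq:21} with those parameters to rewrite the integral as
\begin{equation*}
    V^{k-1}\left(\sum_{j=0}^{k-1} j!\binom{k-1}{j}\left(\tfrac{V}{w}\right)^{-j}\right) w\, e^{-V/w},
\end{equation*}
then multiply through by the prefactor $\frac{w^{-k}}{(k-1)!}$ to absorb the $V^{k-1}$ and $w$ into a single power $\left(\tfrac{V}{w}\right)^{k-1}$. Distributing $\left(\tfrac{V}{w}\right)^{k-1}$ across the sum and simplifying $j!\binom{k-1}{j}/(k-1)! = 1/(k-1-j)!$ turns the summand into $\frac{1}{(k-1-j)!}\left(\tfrac{V}{w}\right)^{k-1-j}$.

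The final step is a re-indexing $i = k-1-j$, which converts the sum into $\sum_{i=0}^{k-1}\frac{1}{i!}\left(\tfrac{V}{w}\right)^i$, giving exactly \eqref{eq:26} after pulling out the $e^{-V/w}$ factor. The only real obstacle is bookkeeping: making sure the factorials and powers of $w$ line up correctly when collapsing the binomial coefficient, but Lemma~\ref{eq:21} has done the genuine analytic work. As a sanity check, one could instead prove the result by induction on $k$: the base case $k=1$ reduces to $\int_V^\infty \frac{1}{w}e^{-z/w}dz = e^{-V/w}$, and the inductive step follows from integration by parts on the leading $\left(\tfrac{V}{w}\right)^{k-1}/(k-1)!$ term, but the direct route via Lemma~\ref{eq:21} is more compact and is the path I would take.
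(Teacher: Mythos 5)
Your proposal is correct and follows essentially the same route as the paper: both apply Lemma 3.5 (equation \eqref{eq:21}) to the tail integral of the Gamma-form density \eqref{eq:19} and then collapse $j!\binom{k-1}{j}/(k-1)!$ and re-index to obtain \eqref{eq:26}. The only difference is cosmetic — the paper first substitutes $u = z/w$ and invokes the lemma with $a=1$, $c=V/w$, whereas you invoke it directly with $a=1/w$, $c=V$; the bookkeeping works out identically either way.
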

\begin{proof}
Let $u = \frac{z}{w}$. Then we have

\begin{equation}
    \label{eq:27}
    \int_{V}^{\infty}{p_k(z) dz} = \frac{1}{(k-1)!}\int_{\frac{V}{w}}^{\infty}{u^{k-1} e^{-u} du}
\end{equation}

\noindent Substituting \eqref{eq:21}, with $c = \frac{V}{w}, a=1,$ and $n=k-1$, into \eqref{eq:27}, reversing the order of the summation, and reducing the binomials and factorials gives the desired result. 
\end{proof}

\subsection{Handling Arbitrary Sets of Weights}

In most situations, we may not expect our weights to be uniformly identical or pair-wise distinct. In these cases, we cannot directly apply our derived expressions to the full set of weighted p-values. However, we propose here a computational method that will allow for handling arbitrary sets of weights.

Let $V$ again denote the weighted sum of the negative log p-values. Additionally, let $\{w_i\}$ denote the set of \textit{unique} weights and $\{n_i\}$ denote the multiplicities of the weights, i.e. $w_i$ appears $n_i$ times. Finally, for every $i$, let $a_i = (w_i)^{-1}$ and let $V_i$ denote the weighted sum of all $n_i$ negative log p-values associated with $w_i$. Having proven theorem 3.4, we know that the PDF $p_i$ for $V_i$ follows the form given by \eqref{eq:19}, with $k=n_i$ and $w=w_i$.

Let $p(V)$ denote the PDF for $V$. Since $V = \sum_{i}{V_i}$, it follows that $p(V)$ is the convolution of all the $p_i$. Now, let $f(s)$ be the Laplace Transform (LT) of $p(V)$ and let $f_i$ be the LT of $p_i$ for each $i$. Using \eqref{eq:19}, along with the fact that the LT converts convolutions to products, we have that

\begin{equation}
    \label{eq:28}
    f(s) = \prod_{i}{f_i(s)} = \prod_{i}{\frac{{a_i}^{n_i}}{(s+a_i)^{n_i}}}
\end{equation}

Furthermore, we have that the LT, $F$, of the left-tail p-value for $V$ is given by

\begin{equation}
    \label{eq:29}
    F(s) = \frac{f(s)}{s} = \frac{1}{s}\prod_{i}{\frac{{a_i}^{n_i}}{(s+a_i)^{n_i}}}
\end{equation}

As can be seen in \eqref{eq:29}, $F(s)$ is a rational fraction, and so can be decomposed into the following form:

\begin{equation}
    \label{eq:30}
    F(s) = \frac{c}{s} + \sum_{i}{\sum_{j=1}^{n_i}{\frac{c_{ij}}{(s+a_i)^{j}}}}
\end{equation}

and consequently the reverse LT can be applied to \eqref{eq:30} to obtain

\begin{equation}
    \label{eq:31}
    P(V) = c + \sum_{i}{\sum_{j=1}^{n_i}{\frac{c_{ij}}{(j-1)!} V^{j-1} e^{-a_i V}}}
\end{equation}

 Determining the values of $c$ and the $c_{ij}$ that will render \eqref{eq:29} and \eqref{eq:30} equal is a special case of partial fraction decomposition, a problem that has been very well studied \cite{brugia} \cite{pfe}. Recursive expressions for these coefficients exist that can be computed quite efficiently \cite{pfe}. 
 
 In fact, these recursions can be solved explicitly to obtain analytic expressions for $c$ and the $c_{ij}$, but we leave such expansions as exercises for the interested reader. The derivation of such expansions should not require tools or techniques beyond those used in this paper. Additionally, the LT-based methodology we have outlined in this subsection for deriving an expression for $P(V)$ can be used, even without the more sophisticated results on partial fraction decomposition, to provide alternative proofs for theorems 3.2, 3.3, 3.4, and 3.6.
 
  As we have stated, the values of $c$ and the $c_{ij}$, and through them the function $P$, are accessible to us for an arbitrary set of weights. By construction, $P(V)$ denotes the left-tail p-value of $V$, and so $1-P(V)$ is the right-tail p-value of $V$, i.e. the combined p-value that we set out to compute.
 
 An implementation of this procedure for computing the combined p-value for an arbitrary set of weights can be found on GitHub, under arvindthiagarajan/multimodal-statistics.

\bibliographystyle{plain}  
\bibliography{references}

\begin{thebibliography}{1}

\bibitem{alves}
Gelio Alves and Yi-Kuo Yu.
\newblock Accuracy evaluation of the unified p-value from combining correlated
  p-values.
\newblock {\em PLOS ONE}, 9(3):1--11, 03 2014.

\bibitem{brugia}
Odoardo Brugia.
\newblock A noniterative method for the partial fraction expansion of a
  rational function with high order poles.
\newblock {\em SIAM Review}, 7(3):381--387, 1965.

\bibitem{reference}
Ronald~A. Fisher.
\newblock {\em Statistical Methods for Research Workers}.
\newblock Oliver and Boyd (Edinburgh), 7 edition, 1938.

\bibitem{good}
Irving~J. Good.
\newblock On the weighted combination of significance tests.
\newblock {\em Journal of the Royal Statistical Society: Series B
  (Methodological)}, 17(2):264--265, 1955.

\bibitem{pfe}
Youneng Ma, Jinhua Yu, and Yuanyuan Wang.
\newblock Efficient recursive methods for partial fraction expansion of general
  rational functions.
\newblock {\em Journal of Applied Mathematics}, 2014:895036, Oct 2014.

\end{thebibliography}

\nocite{*}

\end{document}